\DeclareSymbolFont{matha}{OML}{txmi}{m}{it}
\DeclareMathSymbol{\varv}{\mathord}{matha}{118}
\begin{document}
	\title{Channel Estimation for  Stacked Intelligent Metasurfaces in Rician Fading Channels} 
	\author{Anastasios Papazafeiropoulos, Pandelis Kourtessis,  Dimitra I. Kaklamani, 			Iakovos S. Venieris \thanks{A. Papazafeiropoulos is with the Communications and Intelligent Systems Research Group, University of Hertfordshire, Hatfield AL10 9AB, U. K.  P. Kourtessis is with the Communications and Intelligent Systems Research Group, University of Hertfordshire, Hatfield AL10 9AB, U. K.  Dimitra I. Kaklamani is with the Microwave and Fiber Optics Laboratory, and Iakovos S. Venieris is  with the Intelligent Communications and Broadband Networks Laboratory, School of Electrical and Computer Engineering, National Technical University of Athens, Zografou, 15780 Athens,	Greece.	
			Corresponding author's email: tapapazaf@gmail.com.}}
	\maketitle\vspace{-1.7cm}
	\begin{abstract}	
		The recent combination of  the rising architectures, known as stacked intelligent metasurface (SIM) and  holographic multiple-input multiple-output (HMIMO), drives toward breakthroughs for next-generation wireless communication systems. Given the fact that the number of elements per surface of the SIM is much larger than the base station (BS) antennas, the acquisition of the channel state information (CSI) in SIM-aided multi-user systems is challenging, especially when a line-of-sight (LoS) component is present. Thus, in this letter, we address the channel procedure under conditions of Rician fading by proposing a protocol in terms of a minimum mean square error (MMSE) estimator for wave-based design in a single phase. Moreover, we derive the normalized mean square error (NMSE) of the suggested estimator, and provide the optimal phase shifts minimising the NMSE. Numerical results illustrate the performance of the new channel estimation protocol.
	\end{abstract}
	\begin{keywords}
		Stacked intelligent metasurfaces (SIM), holographic MIMO (HMIMO), channel estimation,    6G networks. 
	\end{keywords}
	
	\section{Introduction}
	Holographic multiple-input multiple-output (HMIMO) is an emerging architecture candidate to result in the rate improvement suggested by next-generation wireless networks \cite{Wang2023}. In particular, HMIMO, suggests the integration of a large number of elements with radiating or sensing roles that can achieve high spectral and energy efficiencies by targeting the energy to more specific areas \cite{Pizzo2022, An2023}. However, HMIMO technology is still in its infancy and a gap appears between theoretical and experimental results. The realisation of HMIMO has been suggested by using metasurfaces since these have certain advantages such as a smaller number of radio frequency (RF) chains and a lower power consumption \cite{Deng2023}. 
	
	In this direction, a novel technology known as stacked intelligent metasurface (SIM), has been proposed that enhances conventional metasurfaces design \cite{An2024}. Specifically, a SIM 
	performs operations in the wave domain at the speed of light. Moreover, a SIM reduces the hardware cost, latency, and power consumption compared to a fully digital architecture. On this ground, in \cite{An2023d}, a SIM-assisted HMIMO architecture was proposed with a reduced number of RF chains by implementing the transmit precoding and receive combining with SIMs. Furthermore, the hybrid design was also considered in  \cite{Papazafeiropoulos2024a} and \cite{Papazafeiropoulos2024}, where the achievable rates were studied with all element parameters being optimised simultaneously through more efficacious algorithmic rules for multiple-input
	single-output (MISO) and multiple-input multiple-output (MIMO) systems, respectively. In the case of a wave-based design for large SIMs, the achievable rate  for multi-user multiple-input single-output (MISO) systems was studied in \cite{Papazafeiropoulos2024b}.

	 To take advantage of the promising gains accompanying the SIM-assisted architecture, the channel estimation is of great importance.  In \cite{Yao2024}, estimation over multiple time slots was proposed while the presence of any line-of-sight (LoS) component was neglected. Also, therein, the mean square error (MSE) was optimized. In \cite{An2024a}, a hybrid digital-wave protocol was proposed for Rayleigh channels, where the received training symbols were first processed over multiple sub-phases in the wave domain. Next, digital processing took place to minimise the MSE. \textcolor{black}{Also, in \cite{Yao2024a}, the channel estimation problem in SIM-assisted millimeter-wave communication was addressed by introducing a mutually orthogonal pilot strategy to mitigate multi-user interference and applying a hard threshold orthogonal matching pursuit  algorithm.}
	
	Against this background, we present a channel estimation protocol by accounting for the presence of a LoS component in terms of Rician fading, which implies that the ensuing analysis is more complicated compared to the scenario of Rayleigh fading \cite{Yao2024,An2024a}. Moreover, instead of assuming multiple sub-phases as in \cite{An2024a} and a hybrid wave digital estimator that exhibits lower spectral efficiency, we consider a fully wave-based estimator taking place in a single phase. Also, we optimise the normalized MSE (NMSE), and we show the impact of the fundamental parameters.
	
	\textcolor{black}{	\textit{Notation}: Matrices  and  vectors are represented by boldface upper  and lower case symbols, respectively. The notations $(\cdot)^\T$, $(\cdot)^\H$, and $\mathrm{tr}\!\left( {\cdot} \right)$ denote the transpose, Hermitian transpose, and trace operators, respectively. Also, the symbol  $\EE\left[\cdot\right]$ denotes  the expectation operator. The notation  $\diag\left(\bA\right) $ represents a vector with elements equal to the  diagonal elements of $ \bA $. }

	\section{System Model}\label{System}
	We consider a SIM-aided multi-user multiple-input
	single-output (MISO) communication system as depicted in Fig. \ref{Fig01}. Specifically, the BS is deployed with $N_t$ antennas and serves $K$ single-antenna users, while the SIM includes $L$ metasurface layers. Each layer has $N$ elements with \textcolor{black}{$N_{x}$ and $N_{y}$ being the elements per row and column, while $d_{\mathrm{H}} $ and $d_{\mathrm{V}}$ denote the horizontal width and the vertical height.} A controller is responsible for introducing independent phase shifts onto the (EM) signals propagated through the SIM. This proper adjustment of the phase shifts at each layer enables precoding/combining in the EM wave domain as suggested in \cite{An2023d}.

	\begin{figure}
		\begin{center}
			\includegraphics[width=0.8\linewidth]{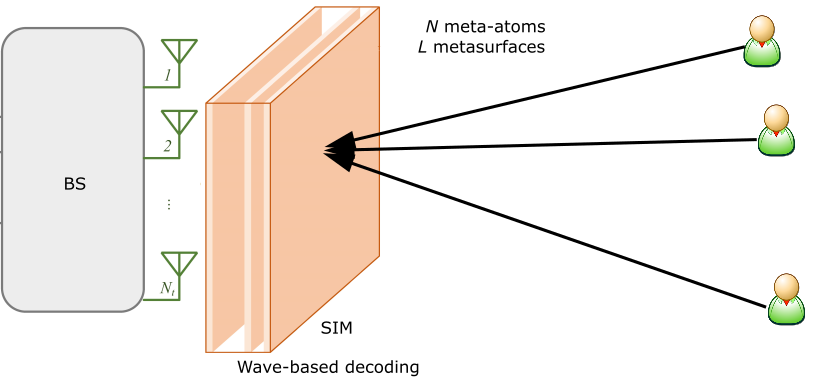}
			\caption{Uplink of a multi-user SIM-aided MISO system. }
			\label{Fig01}
		\end{center}
	\end{figure}

	Let  $\mathcal{K}=\{1,\ldots,K\} $, $ \mathcal{L}=\{1,\ldots,L\} $, and $ \mathcal{N}=\{1,\ldots,N\} $ denote the sets of users, metasurfaces, and  meta-atoms,  respectively. 	On this ground, we set $ \phi_{n}^{l}\in [0,2\pi) $ be the phase shift by  element $n$  of the   $ l $th  layer, where  $n \in \mathcal{N}, l \in \mathcal{L}$.  Moreover, we denote $ \bTheta_{l}=\diag(\btheta_{l})\in \mathbb{C}^{N \times N} $ with $ \btheta_{l} =[\theta^{l}_{1}, \dots, \theta^{l}_{N}]^{\T}\in \mathbb{C}^{N\times 1}$, where $ \theta_{n}^{l} =e^{j \phi_{n}^{l}}$. The modulus of the phase shifts equals $ 1 $ \cite{An2023d}. Furthermore, in the case of all $l$ except from layer $1$, $ \bW^{l}\in \mathbb{C}^{N \times N} $ expresses the  matrix of coefficients with   entries 
	\begin{align}
		w_{n,\tilde{n}}^{l}=\frac{A_{t}cos x_{n,\tilde{n}}^{l}}{r_{n,\tilde{n}}^{l}}\left(\frac{1}{2\pi r^{l}_{n,\tilde{n}}}-j\frac{1}{\lambda}\right)e^{j 2 \pi r_{n,\tilde{n}}^{l}/\lambda},\label{deviationTransmitter}
	\end{align}
	where 	\textcolor{black}{ $\lambda$ is the wavelength} while $ \tilde{n} $ and  $ n $ belong to adjacent surfaces.  Also,  $ x_{n,\tilde{n}}^{l} $ is the angle between the  and the propagation direction and transmit metasurface layer  $ (l-1) $ and  $ r_{n,\tilde{n}}^{l} $ is the  corresponding distance. $ A_{t} $ denotes the area of each element. Moreover,  $ \bW^{1} \in \mathbb{C}^{N_t \times  N} $, \textcolor{black}{defined as in \eqref{deviationTransmitter}}, denotes the  matrix of the  transmission coefficients between the $N_t$ transmit antennas
	and the first metasurface layer of the SIM. Hence, the wave-domain beamforming matrix  can be written as
	\begin{align}
		\bG=\bTheta_{L}\bW^{L}\cdots\bTheta_{2}\bW^{2}\bTheta_{1}\in \mathbb{C}^{N \times N}.\label{TransmitterSIM}
	\end{align}
	
	During the uplink, the $K$ users transmit to the SIM-assisted BS in the same time-frequency response. The received signal vector is written as
	\begin{align}
		\by=\sqrt{p}{\bW^{1}}^{\H}\bG^{\H}\sum_{i=1}^{K}\bh_{k}x_{i}+\bz \label{received},
	\end{align}
	where $p$ is the average transmit power  per user, $\bz\sim \mathcal{CN}(\b0,\sigma^{2}\Id_{N_t})$ is the additive noise vector, and $x_{i}$ is the transmit symbol of user $i$. Moreover, by assuming that  LoS paths exist between the BS and the users, we consider the Rician fading  to model the channel  between the last  surface of the SIM and user $ k $, which is denoted by $ \bh_{k}\in \mathbb{C}^{N \times 1}, \forall k \in \mathcal{K}$. It follows as
	\begin{align}
		\bh_{k}=\sqrt{\frac{\beta_{k}}{1+\kappa_{k}}}\left(\sqrt{{\kappa_{k}}}\bar{\bh}_{k}+	\tilde{\bh}_{k}\right),~~\forall k\in \mathcal{K}.\label{channel}
	\end{align}
	In \eqref{channel},  $ \beta_{k} $ is the path-loss coefficient, $ \kappa_{k} $ is the Rician factor that expresses the ratio of the LoS 	power to the NLoS power of the 	corresponding path,	$ \bar{\bh}_{k} \in \mathbb{C}^{N \times 1}$ denotes the LoS component, and $ \tilde{\bh}_{k}\sim \mathcal{CN}(\b0,\bR) \in \mathbb{C}^{N \times 1}$ is the NLoS component with 	 $ \bR\in \mathbb{C}^{N\times N}  $  representing the spatial correlation of each surface as modeled in  \cite{Bjoernson2020}.
	%

	In the case of the LoS channel, it is modelled as
	\begin{align}
		\bar{\bh}_{k}=\ba_{N}(\varphi_{k}^{a}, \varphi_{k}^{e})~~\forall k\in \mathcal{K}\label{channel1}
	\end{align}
	where   $\varphi_{k}^{e}$ and $\varphi_{k}^{a} $ are the elevation and  azimuth  angles of arrival (AoA) of the impinging signal at the SIM from user $k$. Its $n$th entry is given by
	\begin{align}
		[\ba_{N}(\varphi_{k}^{a}, \varphi_{k}^{e})]_{n}&\!=\!\exp\Big\{ j 2 \pi \frac{d_{\mathrm{SIM}}}{\lambda}\Big(\!(\lfloor  n\!-\!1)/\sqrt{N}\rfloor \sin\varphi_{k}^{e}\sin\varphi_{k}^{a}\nn\\
		&+((n\!-\!1)\mod\sqrt{N})\cos\varphi_{k}^{e}\Big)\!\Big\},	\end{align}
	where  $d_{\mathrm{SIM}}$ is the spacing between the elements of each surface of the SIM.
	
	\section{Channel Estimation}
	As known, perfect CSI is unavailable in reality. Thus, we focus on the development of a channel estimation scheme to estimate the channel vector at the SIM-assisted BS. Contrary to RIS-assisted and HMIMO, a coupling appears between the shifts of the phases of the various SIM layers and the transmission coefficients.
	
	\subsection{Uplink Training}
	The use of the 	minimum mean square error (MMSE) method is proposed, where pilot signals are sent from the users to the SIM-assisted BS. We denote $\tau$ as the number of time slots for channel estimation while $\tau_{\mathrm{c}}$ is the length of the channel coherence interval. In particular, mutually orthogonal pilot sequences are sent simultaneously in each coherence interval. Let $\bx_{k}\in \mathbb{C}^{\tau \times 1} $ the pilot sequence of user $k$. We define $\bX=[\bx_{1},\ldots, \bx_{K}]$ obeying to the property $\bX^{\H}\bX=\Id_{K}$. \footnote{\textcolor{black}{Note that  $ \bR  $ is assumed known by the network since it is be modeled as in \cite{Bjoernson2020}. 	Another way for the practical calculation of   $ \bR $  follows. Specifically, it can be observed that the expression of this covariance matrix depends on the distances among the SIM elements.  The distances are known from the construction of the SIM Similarly, the  path loss coefficient of the separate channel betweenthe  SIM and UE $ k $, depending on the distance, is assumed known since the corresponding distance is known. }	}
	
	The $N_t\times \tau $ pilot received signal at the BS is written as
	\begin{align}
		\bY_{\mathrm{p}}=\sqrt{\tau \rho}{\bW^{1}}^{\H}\bG^{\H}\bH \bS^{H}+\bZ\label{received1},
	\end{align}
	where $\bZ$ is  the $N_t \times\tau $  noise matrix including i.i.d. complex Gaussian random variables that have zero mean and variance $\sigma^{2}$, and $\tau \rho$ is the total transmit pilot power. Multiplication of \eqref{received1} by $\frac{\bx_{k}}{\sqrt{\tau \rho}}$ provides the BS the observation received signal for user $k$ given by
	\begin{align}
		\by_{\mathrm{p}}^{k}=\bg_{k} +\frac{1}{\sqrt{\tau \rho}}\bZ\bx_{k},\label{observation}
	\end{align}
	where we have exploited the property of orthogonality among the pilot signals, and we have denoted $\bg_{k} ={\bW^{1}}^{\H}\bG^{\H}\bh_{k}$.
	
	The use of the MMSE criterion allows the derivation of the optimal estimate of the channel of user $k$.
	\begin{theorem}\label{Theorem1}
		The MMSE estimate of the channel vector $\bg_{k}$ is given by
		\begin{align}
			\hat{\bg}_{k}& = \sqrt{q_{k}{\kappa_{k}}}{\bW^{1}}^{\H}\bG^{\H}\ba_{N} 
			\nn\\
			&+q_{k}	\bPsi_{k}\bQ_{k}(\by_{\mathrm{p}}^{k} - \sqrt{q_{k}{\kappa_{k}}}{\bW^{1}}^{\H}\bG^{\H}\ba_{N}),\label{meansignalTheorem}
		\end{align}
		where
		\begin{align}
			\bPsi_{k}&={\bW^{1}}^{\H}\bG^{\H}\bR\bG\bW^{1},\nn\\
			\bQ_{k}&=\left(q_{k}\bPsi_{k}+\frac{\sigma^{2}}{\tau \rho}\Id_{N}\right)^{-1}.
		\end{align}
		Also, the covariance of $\hat{\bg}_{k}$ is given by
		\begin{align}
			\bar{\bTheta}_{k} = {q_{k}{\kappa_{k}}}N{\bW^{1}}^{\H}\bG^{\H}\bG\bW^{1}+q_{k}^{2}	\bPsi_{k}\bQ_{k}\bPsi_{k},
		\end{align}
		while the NMSE for user $k$ is obtained as
		\begin{align}
			\mathrm{NMSE}_{k}=1-\frac{S_{k}}{D_{k}},
		\end{align}
		where $S_{k}={{\kappa_{k}}}N\tr({\bW^{1}}^{\H}\bG^{\H}\bG\bW^{1})+q_{k}\tr(	\bPsi_{k}\bQ_{k}\bPsi_{k})$ and $D_{k}=\tr({\bW^{1}}^{\H}\bG^{\H}\bR\bG\bW^{1})$.	 
	\end{theorem}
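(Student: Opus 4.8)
\textit{Proof outline.} The plan is to read the despread observation \eqref{observation} as a linear Gaussian model for the cascaded channel $\bg_{k}={\bW^{1}}^{\H}\bG^{\H}\bh_{k}$ and to apply the Bayesian MMSE estimator; since, conditioned on the known LoS geometry, $\bg_{k}$ and $\by_{\mathrm{p}}^{k}$ are jointly Gaussian, this estimator is affine and coincides with the linear MMSE estimator.

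First I would collect the statistics of $\bg_{k}$. With $q_{k}\defeq\beta_{k}/(1+\kappa_{k})$, substituting \eqref{channel}--\eqref{channel1} decomposes the cascaded channel as $\bg_{k}=\bar{\bg}_{k}+\sqrt{q_{k}}\,{\bW^{1}}^{\H}\bG^{\H}\tilde{\bh}_{k}$, where $\bar{\bg}_{k}\defeq\sqrt{q_{k}\kappa_{k}}\,{\bW^{1}}^{\H}\bG^{\H}\ba_{N}$ is deterministic and known; since $\tilde{\bh}_{k}\sim\mathcal{CN}(\b0,\bR)$ and ${\bW^{1}}^{\H}\bG^{\H}$ is deterministic, $\mathrm{Cov}(\bg_{k})=q_{k}\bPsi_{k}$ with $\bPsi_{k}$ as in the statement. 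For the noise, the orthonormality $\bX^{\H}\bX=\Id_{K}$ (already used to pass from \eqref{received1} to \eqref{observation}) both cancels the inter-user terms and leaves $\tfrac{1}{\sqrt{\tau\rho}}\bZ\bx_{k}$ white with covariance $\tfrac{\sigma^{2}}{\tau\rho}\Id_{N}$ and independent of $\bg_{k}$.

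Working next with the centred observation $\bv_{k}\defeq\by_{\mathrm{p}}^{k}-\bar{\bg}_{k}=(\bg_{k}-\bar{\bg}_{k})+\tfrac{1}{\sqrt{\tau\rho}}\bZ\bx_{k}$, one has $\mathrm{Cov}(\bv_{k})=q_{k}\bPsi_{k}+\tfrac{\sigma^{2}}{\tau\rho}\Id_{N}=\bQ_{k}^{-1}$ and $\mathbb{E}[(\bg_{k}-\bar{\bg}_{k})\bv_{k}^{\H}]=q_{k}\bPsi_{k}$, so the Gaussian conditional-mean formula gives $\hat{\bg}_{k}=\bar{\bg}_{k}+q_{k}\bPsi_{k}\bQ_{k}\bv_{k}$, which is precisely \eqref{meansignalTheorem}. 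The correlation of $\hat{\bg}_{k}$ then follows from $\mathbb{E}[\bv_{k}\bv_{k}^{\H}]=\bQ_{k}^{-1}$, the independence of $\bv_{k}$ from the deterministic offset, and $\mathbb{E}[\bv_{k}]=\b0$, giving $\bar{\bg}_{k}\bar{\bg}_{k}^{\H}+q_{k}^{2}\bPsi_{k}\bQ_{k}\bPsi_{k}$, which reduces to the stated $\bar{\bTheta}_{k}$ after using the unit-modulus structure of $\ba_{N}$ (so that $\|\ba_{N}\|^{2}=N$). Finally, for the NMSE I would invoke the orthogonality principle, $\mathbb{E}\|\bg_{k}-\hat{\bg}_{k}\|^{2}=\mathbb{E}\|\bg_{k}\|^{2}-\tr(\bar{\bTheta}_{k})$, form the ratio against the channel-power normaliser $D_{k}$, expand the traces, and simplify to $\mathrm{NMSE}_{k}=1-S_{k}/D_{k}$.

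I do not anticipate a genuine obstacle; the only real effort sits in the first step, namely carefully propagating the Rician second-order statistics of $\bh_{k}$ through the non-unitary SIM cascade $\bG=\bTheta_{L}\bW^{L}\cdots\bTheta_{1}$ and the coupling matrix ${\bW^{1}}^{\H}$, and verifying that the pilot despreading reduces the problem to a single-user white-noise model. Once the triple $(\bar{\bg}_{k},\,q_{k}\bPsi_{k},\,\tfrac{\sigma^{2}}{\tau\rho}\Id_{N})$ is fixed, the estimator, its covariance and the NMSE are mechanical consequences of the Gaussian MMSE formula and the orthogonality principle.
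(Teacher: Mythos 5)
Your proposal is correct and follows essentially the same route as the paper's own proof: compute the mean $\sqrt{q_k\kappa_k}\,{\bW^{1}}^{\H}\bG^{\H}\ba_{N}$ and the covariances $\mathrm{Cov}\{\bg_k,\by_{\mathrm{p}}^{k}\}=q_k\bPsi_k$ and $\mathrm{Cov}\{\by_{\mathrm{p}}^{k},\by_{\mathrm{p}}^{k}\}=\bQ_k^{-1}$, plug them into the Gaussian conditional-mean (LMMSE) formula, and then read off the estimator's second-order statistics and the NMSE. Your centred-observation bookkeeping and the appeal to the orthogonality principle are only cosmetic repackagings of the paper's Appendix A, and you gloss over the reduction of the rank-one mean term $\bar{\bg}_k\bar{\bg}_k^{\H}$ to $q_k\kappa_k N{\bW^{1}}^{\H}\bG^{\H}\bG\bW^{1}$ at exactly the same level of detail as the paper does.
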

	\begin{proof}
		Please see Appendix~\ref{Theorem1proof}.	
	\end{proof}
	\begin{remark}\label{Rem1}
		The NMSE depends on the SIM design in terms of the matrices of coefficients $\bW^{l}$, the correlation matrix $\bR$, and the path loss. The following suggested optimization can be applied when the path-loss might change (every several coherence intervals) given that the other parameters are constant for a specific SIM.
	\end{remark}
	
	\subsection{NMSE Optimization}
	Herein, we present the optimization problem, which optimizes the average NMSE, i.e., $\overline{\mathrm{NMSE}}=\frac{1}{K}\sum_{i=1}^{K}{\mathrm{NMSE}}_{k}$ in terms of the phase shifts of the SIM surfaces to exploit  wave-based beamforming. Specifically, given infinite-resolution phase shifters, the optimization problem can be written as
	\begin{subequations}\label{eq:subeqns}
		\begin{align}
			(\mathcal{P})~~&\min_{\btheta_{l}} 	\;		\overline{\mathrm{NMSE}}\label{Maximization1} \\
			&~	\mathrm{s.t}~~~	\bG=\bTheta_{L}\bW^{L}\cdots\bTheta_{2}\bW^{2}\bTheta_{1},
			\label{Maximization4} \\
			&\;\quad\;\;\;\;\;\!\!~\!		\bTheta_{l}=\diag(\theta^{l}_{1}, \dots, \theta^{l}_{N}), l \in \mathcal{L},
			\label{Maximization5} \\
			&\;\quad\;\;\;\;\;\!\!~\!		|	\theta^{l}_{n}|=1, n \in \mathcal{N}, l \in \mathcal{L}.	\label{Maximization7}
		\end{align}
	\end{subequations}
	According to Rem. \ref{Rem1}, one of the benefits of the proposed algorithm is that it can be applied at every several coherence intervals when the large-scale statistics change.
	
	Problem $ 	(\mathcal{P}) $ is non-convex and includes a unit-modulus constraint regarding  $ 		\theta^{l}_{n} $. To solve this, we make use of the alternating optimization (AO) method in terms of  a projected gradient descent algorithm, which will result in a locally optimal solution after convergence. In particular, according to the proposed algorithm, we start  from $ \btheta_{l}^{0} $, and we shift along the derivative of   $ 	\overline{\mathrm{NMSE}}$. 
	
	%
	
	\begin{proposition}\label{propositionGradient}
		The gradient of $\overline{\mathrm{NMSE}}$ with respect to  $\btheta_{l}^{*}$ is derived in	closed-form as	
		\begin{align}
			\nabla_{\btheta_{l}}\overline{\mathrm{NMSE}}(\btheta_{l})&=\frac{1}{K}\sum_{i=1}^{K}\frac{S_k\nabla_{\btheta_{l}}{D}_{k}-{D}_{k}\nabla_{\btheta_{l}}S_k}{{D}_{k}^{2}},\label{grad11}	
		\end{align}
		where
		\begin{align}
			\nabla_{\btheta_{l}}{S}_{k}
			&={{\kappa_{k}}}N\diag(\!\bA_{l}^{\H}\bG\bW^{1}{\bW^{1}}^{\H}\bC_{l}^{\H}\nn\\
			&+q_{k}	\diag(\!\bA_{l}^{\H}\bR\bG\bW^{1}\mathbf{Q}_{k}\bPsi_{k}{\bW^{1}}^{\H}\bC_{l}^{\H}\!)\!\nn\\
			&-\!q^{2}_{k}\diag(\bA_{l}^{\H}\bR\bG\bW^{1}\mathbf{Q}_{k}\bPsi_{k}^{2}\mathbf{Q}_{k}{\bW^{1}}^{\H}\bC_{l}^{\H})\nn\\
			&+q_{k}\diag(\bA_{l}^{\H}\bR\bG\bW^{1}\bPsi_{k}\mathbf{Q}_{k}{\bW^{1}}^{\H}\bC_{l}^{\H}),\\
			\nabla_{\btheta_{l}}D_k&=\diag(\!\bA_{l}^{\H}\bG\bR\bW^{1}{\bW^{1}}^{\H}\bC_{l}^{\H}\!)\label{differentialPhi71},\\
		\end{align}
		with  $ 	\bA_{l}=\bTheta_{L}\bW^{L}\cdots\bTheta_{l+1}\bW^{l+1} $, and $\bC_{l}= \bW^{l}\bTheta_{l-1}\bW^{l-1}\cdots \bTheta_{1} $.
	\end{proposition}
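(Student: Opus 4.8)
The plan is to compute the gradient of $\overline{\mathrm{NMSE}}$ directly by the quotient rule, reducing the problem to differentiating $S_k$ and $D_k$ with respect to $\btheta_l^{*}$. Since $\overline{\mathrm{NMSE}}=\frac{1}{K}\sum_{i=1}^{K}\left(1-\frac{S_k}{D_k}\right)$, linearity of the gradient and the scalar quotient rule immediately give \eqref{grad11} once $\nabla_{\btheta_l}S_k$ and $\nabla_{\btheta_l}D_k$ are known, so the entire content of the proposition is the two closed-form differentials. The key structural observation is that $\bG$ is multilinear in the $\bTheta_l$'s, and by \eqref{TransmitterSIM} we can factor $\bG=\bA_l\,\bTheta_l\,\bC_l$ with $\bA_l$ and $\bC_l$ independent of $\btheta_l$; this isolates the dependence on the variable of interest into a single diagonal factor.

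First I would establish the basic Wirtinger-calculus identity: for a matrix-valued expression of the form $\tr(\bM_1\bTheta_l\bM_2)$ where $\bTheta_l=\diag(\btheta_l)$ and $\bM_1,\bM_2$ are constant, one has $\nabla_{\btheta_l}\tr(\bM_1\bTheta_l\bM_2)=\diag(\bM_2\bM_1)$ and, for the conjugated factor, $\nabla_{\btheta_l^{*}}\tr(\bM_1\bTheta_l^{\H}\bM_2)=\diag(\bM_1^{\H}\bM_2^{\H})^{*}$ handled analogously; I would state the one or two such identities actually needed and cite a standard matrix-calculus reference. Then I would substitute $\bG=\bA_l\bTheta_l\bC_l$ and $\bG^{\H}=\bC_l^{\H}\bTheta_l^{\H}\bA_l^{\H}$ into the definitions $D_k=\tr({\bW^1}^{\H}\bG^{\H}\bR\bG\bW^1)$, $\bPsi_k={\bW^1}^{\H}\bG^{\H}\bR\bG\bW^1$, and $S_k={\kappa_k}N\tr({\bW^1}^{\H}\bG^{\H}\bG\bW^1)+q_k\tr(\bPsi_k\bQ_k\bPsi_k)$, and differentiate term by term using the product rule, treating $\bTheta_l$ and $\bTheta_l^{\H}$ as the independent/conjugate pair and holding $\bA_l,\bC_l,\bW^1,\bR$ fixed.

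The $D_k$ term is the easiest: one application of the trace-cyclic property plus the basic identity yields $\nabla_{\btheta_l}D_k=\diag(\bA_l^{\H}\bG\bR\bW^1{\bW^1}^{\H}\bC_l^{\H})$ as in \eqref{differentialPhi71}. For $S_k$ the first summand is the $\bR=\Id$ specialization of the same computation, producing the ${\kappa_k}N\diag(\bA_l^{\H}\bG\bW^1{\bW^1}^{\H}\bC_l^{\H})$ term. The genuinely laborious piece is the second summand $q_k\tr(\bPsi_k\bQ_k\bPsi_k)$: here $\btheta_l$ enters through $\bPsi_k$ in two places and also implicitly through $\bQ_k=(q_k\bPsi_k+\tfrac{\sigma^2}{\tau\rho}\Id_N)^{-1}$, so I would use the inverse-differentiation rule $d\bQ_k=-\bQ_k(d\bQ_k^{-1})\bQ_k=-q_k\bQ_k(d\bPsi_k)\bQ_k$, collect the three resulting contributions (two from the explicit $\bPsi_k$ factors, one from $\bQ_k$), and in each use cyclicity of the trace to expose a single $\bTheta_l$ (or $\bTheta_l^{\H}$) so the basic identity applies. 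Bookkeeping those three terms, together with the symmetry $\bPsi_k=\bPsi_k^{\H}$ and $\bQ_k=\bQ_k^{\H}$ which lets the $\bTheta_l$ and $\bTheta_l^{\H}$ derivatives be combined, reproduces exactly the remaining three $\diag(\cdots)$ lines in the statement of $\nabla_{\btheta_l}S_k$.

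The main obstacle will be precisely this last bookkeeping step: keeping track of which factor is conjugated, applying $d\bQ_k=-q_k\bQ_k(d\bPsi_k)\bQ_k$ correctly, and using the trace cyclic property the right number of times so that every term collapses to a $\diag$ of a product of the constant matrices in the stated order — a purely mechanical but error-prone calculation. Everything else (the quotient rule, the factorization $\bG=\bA_l\bTheta_l\bC_l$, the $D_k$ differential) is routine, so I would present those briefly and devote the bulk of the appendix to carefully expanding $\nabla_{\btheta_l}\tr(\bPsi_k\bQ_k\bPsi_k)$.
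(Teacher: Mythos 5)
Your proposal follows essentially the same route as the paper's Appendix B: the same factorization $\bG=\bA_l\bTheta_l\bC_l$ isolating $\bTheta_l$, the same three-term expansion of $d\bigl(\tr(\bPsi_k\bQ_k\bPsi_k)\bigr)$ with $d\bQ_k=-q_k\bQ_k(d\bPsi_k)\bQ_k$, and the same trace-cyclic/$\diag$ extraction keeping only the $\bTheta_l^{\H}$ (conjugate) contributions, then the quotient rule to assemble \eqref{grad11}. The approach is correct and matches the paper's proof in all essentials.
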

	\begin{proof}
		Please see Appendix~\ref{proposition1}.	
	\end{proof}
	
	Notably, the gradient is obtained in closed form. Regarding the complexity,  \eqref{Maximization1} has complexity $ \mathcal{O}(KN^2+LN^{2} +LN^{3} )$ while the gradient presents similar complexity, i.e.,  the number of  elements per surface exhibits the  highest  impact.
	
	\section{Numerical Results}
	We design the setup by assuming that the SIM is parallel to the $ x-y $ plane. Specifically, we assume that it is centered  along the $ z-$axis at a height $ H=10~\mathrm{m} $.  The locations of the  users  are uniformly distributed  at a distance between $60\mathrm{m}$ and $80\mathrm{m}$. Regarding the SIM, \textcolor{black}{the size of each meta-atom  is $ \lambda/2 \times \lambda /2 $, i.e., $d_{\mathrm{H}}=d_{\mathrm{H}}=\lambda/2$} while the spacing  between adjacent meta-atoms  is  $ \lambda/2 $.  \textcolor{black}{$N_{x}=8$ and $N_{y}=4$	or $N_{y}=8$ in the cases of $N=32$ or $N=64$, respectively}. The SIM has
	a thickness equal to $ T_{\mathrm{SIM}}=5 \lambda $, which means that the spacing is $ d_{\mathrm{SIM}}= T_{\mathrm{SIM}}/L$.  Moreover, we assume that $ N_{t}=K=4$ while the locations of the  users  are randomly distributed  at a distance between $60\mathrm{m}$ and $80\mathrm{m}$. The path-loss coefficients $\beta_{k}$'s are obtained by using \cite[Eq. (16)]{An2023d} with $d_{0}=1$ meter and $b=3.5$. The Rice factor is set to $\kappa_{k}=10$ unless otherwise specified. Regarding the channel estimation parameters, we set $ \tau=K$, $\rho= 1$W, and $\sigma^{2}=-110$dBm.
	
	The proposed estimator is assessed in terms of the following two scenarios.
	\begin{enumerate}
		\item A ``Conventional'' multi-user MISO system with no SIM according to the MMSE channel estimation in \cite{Hoydis2013}. In this case, we consider a performance bound by assuming $N_{t}=N$.
		\item A ``Codebook-based'' scheme with size $10 L N$, where we generate a set of phase shifts vectors $\btheta_{l}$ based on the Uniform distribution. Next, the application of the MMSE estimator takes place, and  the $\btheta_{l}$, which minimizes the NMSE, is selected.
	\end{enumerate}
	
	In Fig. \ref{fig1}, we illustrate the convergence of the gradient descent algorithm for minimizing the NMSE. As can be seen, the MMSE channel estimator converges after several iterations. The larger the SIM, the more iterations are required  for convergence. In particular, the impact of the number of elements per surface has a greater impact than the number of surfaces.  Moreover, based on Rem. \ref{Rem1}, the execution of the algorithm is necessary  when the path loss changes, i.e., at  every several coherence intervals. 
	
	\begin{figure}[!h]
		\begin{center}
			\includegraphics[width=0.8\linewidth]{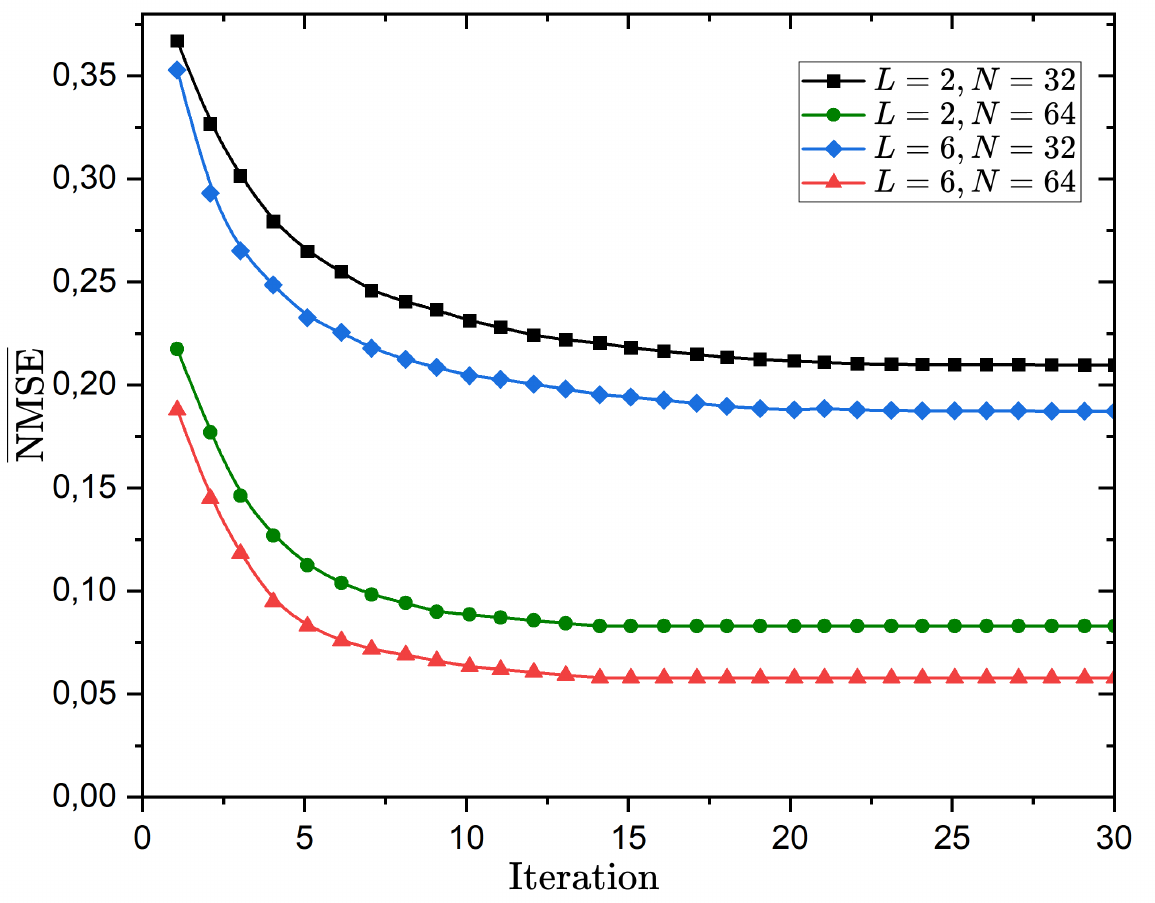}
			\caption{NMSE versus the number of iterations.}
			\label{fig1}
		\end{center} 
	\end{figure}
	
	In Fig. \ref{fig2}, we depict the NMSE with respect to the number of surfaces of the SIM $L$. In the same figure, we show the comparison with other schemes, and the impact of the number of elements  per layer $N$. It is shown that the NMSE decreases with  both $L$ and $N$. Specifically, the gain due to the number of layers is obtained because of the SIM's wave-based design. Moreover, we observe the saturation of the NMSE after $L=6$ layers. It is also shown the improvement with respect to a single layer ($L=1$). Notably, the proposed channel estimator for not very large $L $ and $N$, i.e., $L =6$ and $N=64$ with $N_t=4$ RF chains achieves almost the same NMSE with the conventional MIMO system with  $N_t=4$ RF chains that is more power consuming and more expensive. 
	
	\begin{figure}[!h]
		\begin{center}
			\includegraphics[width=0.8\linewidth]{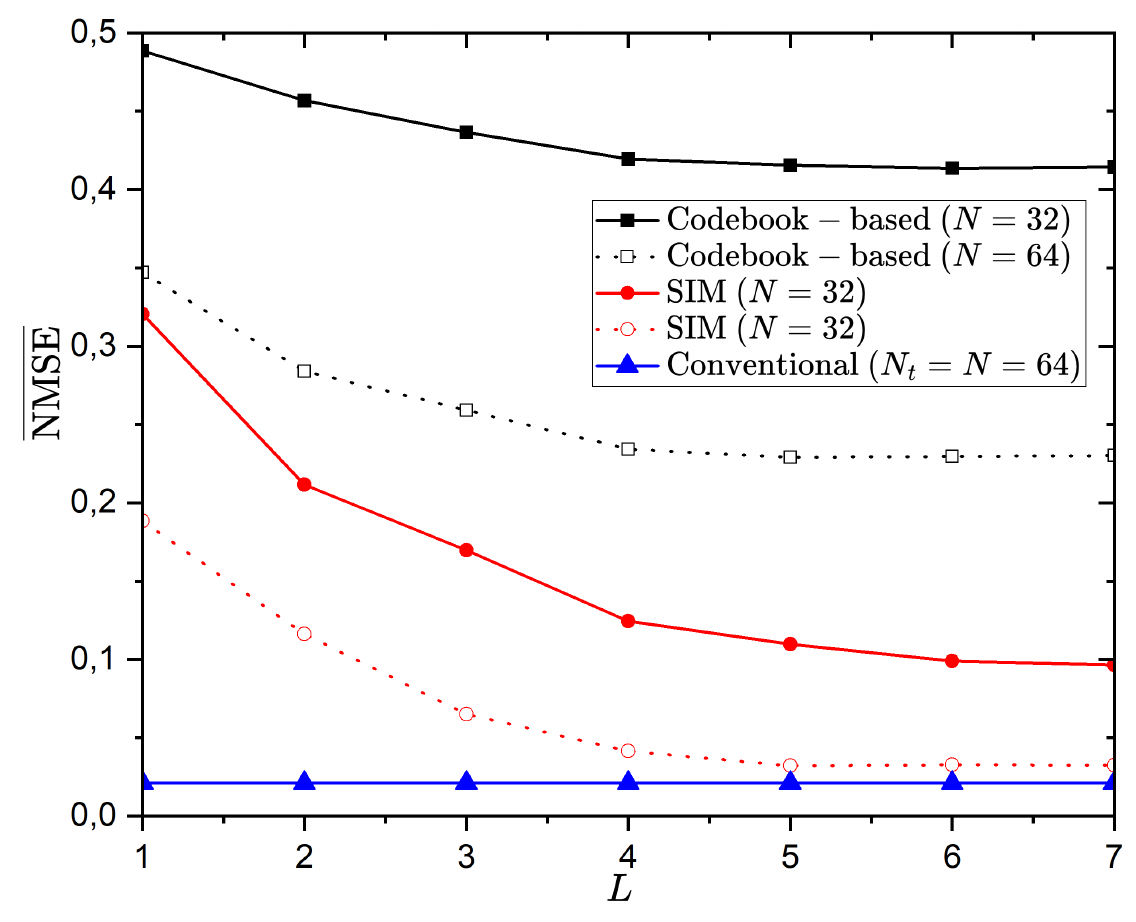}
			\caption{NMSE versus the number of layers.}
			\label{fig2}
		\end{center} 
	\end{figure}

	In Fig. \ref{fig3}, we illustrate the NMSE with respect to the training SNR. In this figure, we consider that the number of surfaces of the SIM is $L=6$ while the number of elements per layer is plotted for $N=32$ and $N=64$. We observe that the NMSE decreases with $N$. In parallel, Monte-Carlo (MC) simulations in terms of $10^{3}$ channel realizations verify the analytical results. Moreover,  it is shown that the codebook-based estimator achieves poor performance compared to the proposed channel estimator after optimization. Furthermore, we depict the performance of a conventional multi-user MISO system with $64$ BS antennas, where the signal processing takes place only in the digital domain. Although this architecture achieves very good performance, the proposed channel estimator manages to approach it in the case of $L=6$ and $N=64$. This means that the SIM is more attractive since it achieves almost the same performance with lower power consumption and cheaper hardware. In addition, we demonstrate the impact of the Rice factor for a SIM with $N=32$ elements per layer. We observe that an increase in $\kappa_{k}$ yields a lower rate. As a special case exhibiting lower performance, we have shown the line corresponding to purely non-LoS conditions, i,e., a Rayleigh channel, where $\kappa_{k}=0$.

	\begin{figure}[!h]
		\begin{center}
			\includegraphics[width=0.8\linewidth]{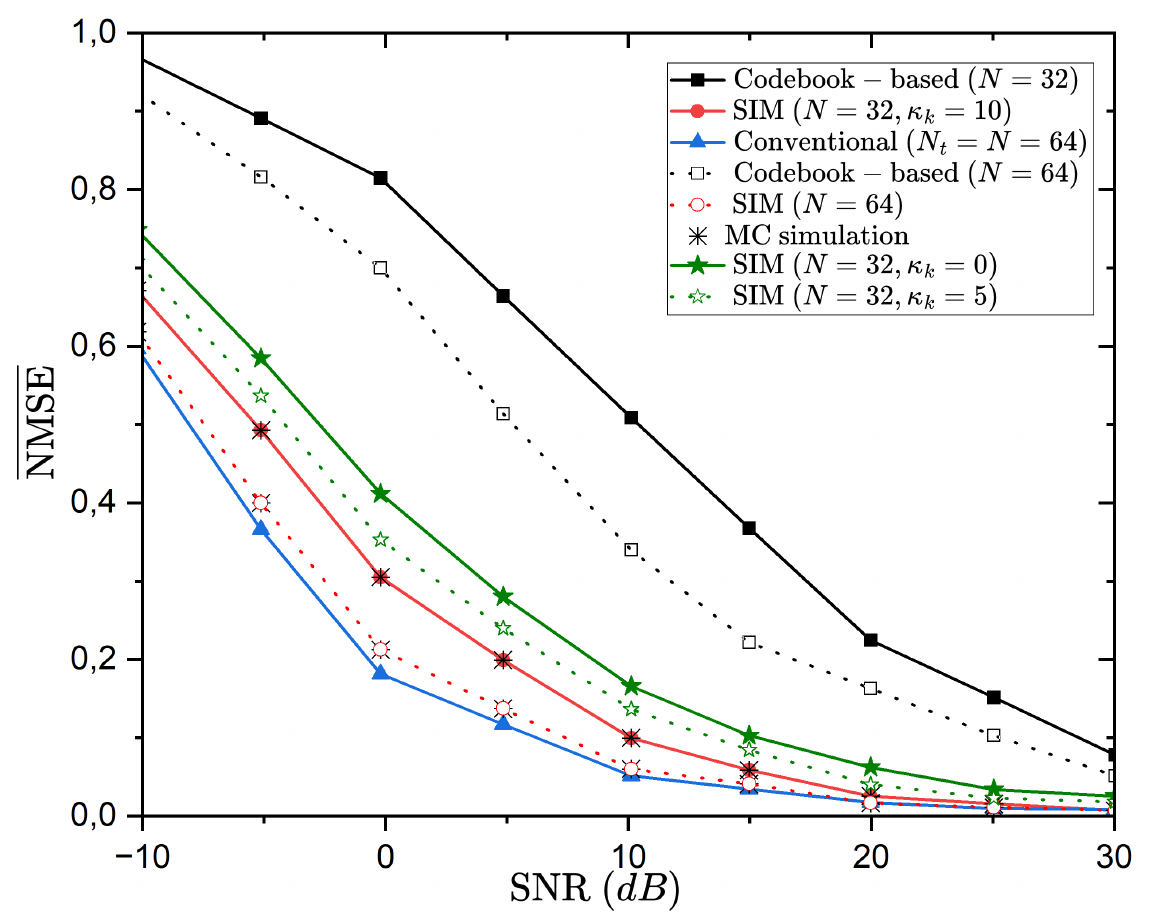}
			\caption{NMSE versus the training SNR.}
			\label{fig3}
		\end{center} 
	\end{figure}

	\section{Conclusion}
	In this letter,  a channel estimation protocol for multi-user HMIMO wireless systems was proposed given that the acquisition of CSI is of great significance in SIM-aided systems. Specifically, we took into account a line-of-sight component in terms of Rician fading and obtained the estimated channel by assuming only a wave-based design in a single phase. Also, we obtained the NMSE and minimised it by finding the optimal phase shifts of the SIM. \textcolor{black}{Notably, the proposed approach allowed to express the estimated channel in closed-form, which is easy for manipulations. Thus, future research could focus on the derivation and characterization of the corresponding uplink/downlink achievable rate.}

	\begin{appendices}
		\section{Proof of Theorem~\ref{Theorem1}}\label{Theorem1proof}	
		The MMSE estimate of the channel $\bg_{k}$ is written in terms of the received observation vector $	\by_{\mathrm{p}}^{k}$ as \cite[Chapter 12.5]{Kay}
		\begin{align}
			\!\!\!	\hat{\bg}_{k}\!=\!\EE\{\bg_{k}\}\!+\!\mathrm{Cov}\{\bg_{k},\by_{\mathrm{p}}^{k}\}\mathrm{Cov}^{-1}\{\by_{\mathrm{p}}^{k},\by_{\mathrm{p}}^{k}\}(\by_{\mathrm{p}}^{k}\!-\!\EE\{\by_{\mathrm{p}}^{k}\}),\!\!\label{signal}
		\end{align}
		where the derivations of the mean and covariance matrices follow. Specifically, we have
		\begin{align}
			\EE\{\bg_{k}\}={\bW^{1}}^{\H}\bG^{\H}\EE\{\bh_{k}\}=\sqrt{q_{k}{\kappa_{k}}}{\bW^{1}}^{\H}\bG^{\H}\ba_{N},\label{mean}
		\end{align}
		where we have substituted \eqref{channel} and \eqref{channel1} while we have denoted $q_{k}=\frac{\beta_{k}}{1+\kappa_{k}}$.
		
		Regarding the derivation of $\EE\{\by_{\mathrm{p}}^{k}\}$, we recall \eqref{observation}, and we have
		\begin{align}
			\EE\{\by_{\mathrm{p}}^{k}\}&=	\EE\{\bg_{k}\} +\frac{1}{\sqrt{\tau \rho}}	\EE\{\bZ\}\bx_{k}=	\EE\{\bg_{k}\}\nn\\
			&=\sqrt{q_{k}{\kappa_{k}}}{\bW^{1}}^{\H}\bG^{\H}\ba_{N}.	\label{meanreceived}	
		\end{align}
		\begin{figure*}
			\begin{align}
				\bar{\bPhi}_{k} = {q_{k}{\kappa_{k}}}N{\bW^{1}}^{\H}\bG^{\H}\bG^{\H}\bW^{1} +q_{k}^{2}{\bW^{1}}^{\H}\bG^{\H}\bR\bG\bW^{1}\left(q_{k}{\bW^{1}}^{\H}\bG^{\H}\bR\bG\bW^{1}+\frac{\sigma^{2}}{\tau \rho}\Id_{N}\right)^{-1}\bW^{1}\bG\bR\bG^{\H}{\bW^{1}}^{\H}.\label{meansignalerror}
			\end{align}
			\hrulefill
		\end{figure*}
		Moreover, the covariance matrix $\mathrm{Cov}\{\bg_{k},\by_{\mathrm{p}}^{k}\} $ can be obtained as
		\begin{align}
			& \mathrm{Cov}\{\bg_{k},\by_{\mathrm{p}}^{k}\}=\EE\Big\{\!(\bg_{k}-\EE\{\bg_{k}\})\!\!\left(\by_{\mathrm{p}}^{k}-\EE\{\by_{\mathrm{p}}^{k}\}\right)^{\H}\!\!\Big\}\nn\\
			& =\EE\bigg\{\!(\bg_{k}-\EE\{\bg_{k}\})\!\!\left(\!\bg_{k}\! +\!\frac{1}{\sqrt{\tau \rho}}\bZ\bx_{k}-\EE\{\bg_{k}\}\!\!\right)^{\!\!\H}\!\bigg\}\nn\\
			&=\mathrm{Cov}\{\bg_{k},\bg_{k}\}.\label{covSignal}
		\end{align}
		
		The latter covariance, i.e., $\mathrm{Cov}\{\bg_{k},\bg_{k}\}$, can be derived by using the definition. Thus, we have
		\begin{align}
			&\mathrm{Cov}\{\bg_{k},\bg_{k}\}=	\mathrm{Cov}\{{\bW^{1}}^{\H}\bG^{\H}\bh_{k},{\bW^{1}}^{\H}\bG^{\H}\bh_{k}\}\nn\\
			&=\EE\{{\bW^{1}}^{\H}\bG^{\H}\bh_{k}({\bW^{1}}^{\H}\bG^{\H}\bh_{k})^{\H}\}\nn\\
			&-\EE\{{\bW^{1}}^{\H}\bG^{\H}\bh_{k}\}\EE\{({\bW^{1}}^{\H}\bG^{\H}\bh_{k})^{\H}\}\nn\\
			&=\EE\{{\bW^{1}}^{\H}\bG^{\H}\bh_{k}\bh_{k}^{\H}({\bW^{1}}^{\H}\bG^{\H})^{\H}\}\nn\\
			&-\EE\{{\bW^{1}}^{\H}\bG^{\H}\bh_{k}\}\EE\{\bh_{k}^{\H}({\bW^{1}}^{\H}\bG^{\H})^{\H}\}\nn\\
			&={\bW^{1}}^{\H}\bG^{\H}\EE\{\bh_{k}\bh_{k}^{\H}\}({\bW^{1}}^{\H}\bG^{\H})^{\H}\nn\\
			&-{\bW^{1}}^{\H}\bG^{\H}\EE\{\bh_{k}\}\EE\{\bh_{k}^{\H}\}({\bW^{1}}^{\H}\bG^{\H})^{\H}\nn\\
			&={\bW^{1}}^{\H}\bG^{\H}\mathrm{Cov}\{\bh_{k},\bh_{k}\}\bG\bW^{1}.
		\end{align}
		Now, the covariance $\mathrm{Cov}\{\bh_{k},\bh_{k}\}$ can be obtained by using \eqref{channel} after some manipulations as
		\begin{align}
			\mathrm{Cov}\{\bh_{k},\bh_{k}\}=q_{k}\bR\Id_{N}.
		\end{align}
	\end{appendices}
	Hence, we have
	\begin{align}
		\mathrm{Cov}\{\bg_{k},\bg_{k}\}=q_{k}{\bW^{1}}^{\H}\bG^{\H}\bR\bG\bW^{1}.
	\end{align}
	
	In a similar way, we derive $\mathrm{Cov}\{\by_{\mathrm{p}}^{k},\by_{\mathrm{p}}^{k}\}$ as
	\begin{align}
		\mathrm{Cov}\{\by_{\mathrm{p}}^{k},\by_{\mathrm{p}}^{k}\}&=\mathrm{Cov}\{\bg_{k},\bg_{k}\}+\frac{1}{\tau \rho}\EE\{\bZ\bx_{k}\bx_{k}^{\H}\bZ^{\H}\}\nn\\
		&=q_{k}{\bW^{1}}^{\H}\bG^{\H}\bR\bG\bW^{1}+\frac{\sigma^{2}}{\tau \rho}\Id_{N}.\label{covSignal1}
	\end{align}
	
	By substituting \eqref{mean}, \eqref{meanreceived}, \eqref{covSignal}, and \eqref{covSignal1} into \eqref{signal}, we obtain the estimated channel vector.
	
	The covariance of \eqref{signal} can be obtained easily as in \eqref{meansignalerror} at the top of the  page. Also, the NMSE of the estimate of $\hat{\bg}_{k}$ is derived by using the definition
	\begin{align}
		\mathrm{NMSE}_{k}=\frac{\tr(\mathrm{Cov}\{\bg_{k}-\hat{\bg}_{k},\bg_{k}-\hat{\bg}_{k}\})}{\tr(\mathrm{Cov}\{\bg_{k},\bg_{k}\})},
	\end{align}
	which results in the desired result after appropriate substitutions.
	
	\section{Proof of Proposition~\ref{propositionGradient}}\label{proposition1}	
	The proof  starts with the calculation of $\nabla_{\btheta_{l}}{S}_{k}$. We denote ${S}_{k}=\kappa_{k} N {S}_{1,k}+q_{k} {S}_{2,k}$. We start by obtaining the corresponding differentials. In particular, we have for $d({S}_{2, k})$ that 
	\begin{align}
		&	d({S}_{2, k})=\tr(d(\bPsi_{k}\bQ_{k}\bPsi_{k}))\nn\\
		&=\tr(d(\bPsi_{k})\bQ_{k}\bPsi_{k}+\bPsi_{k}d(\bQ_{k})\bPsi_{k}+\bPsi_{k}\bQ_{k}d(\bPsi_{k})).\label{step0}
	\end{align}
	We continue with the derivations of $d(\bPsi_{k})$ and $d(\bQ_{k})$. The former is written as
	\begin{align}
	&	d(\bPsi_{k})=d({\bW^{1}}^{\H}\bG^{\H}\bR\bG\bW^{1})\nn\\
			&={\bW^{1}}^{\H}d(\bG^{\H})\bR\bG\bW^{1}+{\bW^{1}}^{\H}\bG^{\H}\bR d(\bG)\bW^{1}\label{step1}\\
		&={\bW^{1}}^{\H}\bC_{l}^{\H}d(\bTheta_{l}^{\H})\bA_{l}^{\H}\bR\bG\bW^{1}\!\!+{\bW^{1}}^{\H}\bG^{\H}\bR\bA_{l} d(\bTheta_{l})\bC_{l}\bW^{1}\label{step2}
	\end{align}
	In \eqref{step1}, we have substituted $d(\bG)=\bA_{l} d(\bTheta_{l})\bC_{l}$ with $ 	\bA_{l}=\bTheta_{L}\bW^{L}\cdots\bTheta_{l+1}\bW^{l+1} $ and $\bC_{l}= \bW^{l}\bTheta_{l-1}\bW^{l-1}\cdots \bTheta_{1} $.
	
	Also, we have that the differential of the inverse matrix $\mathbf{Q}_{k}$ can be written as
	\cite[eqn. (3.40)]{hjorungnes:2011} to obtain
	\begin{align}
		&d(\mathbf{Q}_{k})  =d\bigl(q_{k}\bPsi_{k}+\frac{\sigma^{2}}{\tau \rho}\Id_{N}\bigr)^{-1}\nn\\
		&=-\bigl(q_{k}\bPsi_{k}+\frac{\sigma^{2}}{\tau \rho}\Id_{N}\bigr)^{-1}d\bigl(q_{k}\bPsi_{k}+\frac{\sigma^{2}}{\tau \rho}\Id_{N}\bigr)\bigl(q_{k}\bPsi_{k}+\frac{\sigma^{2}}{\tau \rho}\Id_{N}\bigr)^{-1}\nonumber \\
		& =-q_{k}\mathbf{Q}_{k}d(\bPsi_{k})\mathbf{Q}_{k}.\label{eq:dQk}
	\end{align}
	Combination of  \eqref{step0} and (\ref{eq:dQk})
	results in
	\begin{align}
		d({S}_{2,k})&=\tr(d(\bPsi_{k})\mathbf{Q}_{k}\bPsi_{k})\!-\!q_{k}\tr(\bPsi_{k}\mathbf{Q}_{k}d(\bPsi_{k})\mathbf{Q}_{k}\bPsi_{k})\nn\\
		&+\tr(\bPsi_{k}\mathbf{Q}_{k}d(\bPsi_{k})).\label{eq:dPsik-1}
	\end{align}
	Next, substitution of \eqref{step2} into \eqref{eq:dPsik-1} yields
	\begin{align}
		d({S}_{2,k})&=\tr(\!\bA_{l}^{\H}\bR\bG\bW^{1}\mathbf{Q}_{k}\bPsi_{k}{\bW^{1}}^{\H}\bC_{l}^{\H}d(\bTheta_{l}^{\H})\!)\!\nn\\
		&	+\tr(\!\bC_{l}\bW^{1}\mathbf{Q}_{k}\bPsi_{k}{\bW^{1}}^{\H}\bC_{l}^{\H}{\bW^{1}}^{\H}\bG^{\H}\bR\bA_{l} d(\bTheta_{l})\!)\!\nn\\
		&-\!q_{k}\tr(\bA_{l}^{\H}\bR\bG\bW^{1}\mathbf{Q}_{k}\bPsi_{k}^{2}\mathbf{Q}_{k}{\bW^{1}}^{\H}\bC_{l}^{\H}d(\bTheta_{l}^{\H}))\nn\\
		&-\!q_{k}\tr(\bC_{l}\bW^{1}\mathbf{Q}_{k}\bPsi_{k}^{2}\mathbf{Q}_{k}{\bW^{1}}^{\H}\bG^{\H}\bR\bA_{l} d(\bTheta_{l}))\nn\\
		&+\tr(\bA_{l}^{\H}\bR\bG\bW^{1}\bPsi_{k}\mathbf{Q}_{k}{\bW^{1}}^{\H}\bC_{l}^{\H}d(\bTheta_{l}^{\H}))\nn\\
		&+\tr(\bC_{l}\bW^{1}\bPsi_{k}\mathbf{Q}_{k}{\bW^{1}}^{\H}\bG^{\H}\bR\bA_{l} d(\bTheta_{l})).
	\end{align}
	
	Computation of the derivative  gives
	\begin{align}
		\nabla_{\btheta_{l}}S_{2,k}&=\frac{\partial}{\btheta_{l}^{\ast}}S_{2,k}=\diag(\!\bA_{l}^{\H}\bR\bG\bW^{1}\mathbf{Q}_{k}\bPsi_{k}{\bW^{1}}^{\H}\bC_{l}^{\H}\!)\!\nn\\
		&-\!q_{k}\diag(\bA_{l}^{\H}\bR\bG\bW^{1}\mathbf{Q}_{k}\bPsi_{k}^{2}\mathbf{Q}_{k}{\bW^{1}}^{\H}\bC_{l}^{\H})\nn\\
		&+\diag(\bA_{l}^{\H}\bR\bG\bW^{1}\bPsi_{k}\mathbf{Q}_{k}{\bW^{1}}^{\H}\bC_{l}^{\H}).
	\end{align}
	The gradient of $S_{1,k}$ is obtained similarly as
	\begin{align}
		\nabla_{\btheta_{l}}S_{1,k}&=\diag(\!\bA_{l}^{\H}\bG\bW^{1}{\bW^{1}}^{\H}\bC_{l}^{\H}\!).
	\end{align}
	In a similar way, we derive the derivative of $D_k$ as
	\begin{align}
		\nabla_{\btheta_{l}}D_k&=\diag(\!\bA_{l}^{\H}\bG\bR\bW^{1}{\bW^{1}}^{\H}\bC_{l}^{\H}\!),
	\end{align}
	which concludes the proof.
	\bibliographystyle{IEEEtran}
	
	\bibliography{IEEEabrv,bibl}\end{document}